\documentclass[11pt]{article}

\usepackage[letterpaper,margin=1in]{geometry}
\usepackage[T1]{fontenc}
\usepackage[utf8]{inputenc}
\usepackage{amsmath,amssymb,amsfonts}
\usepackage{amsthm}
\usepackage{stmaryrd}
\usepackage{booktabs}
\usepackage{array}
\usepackage{float}
\usepackage{microtype}
\usepackage{xcolor}
\usepackage{graphicx}
\usepackage{tikz}
\usetikzlibrary{arrows.meta}
\usepackage[hidelinks]{hyperref}

\newcommand{\FormCraft}{\textnormal{FormCraft}}
\newcommand{\Library}{\mathcal{L}}
\newtheorem{proposition}{Proposition}
\newtheorem{corollary}{Corollary}
\newtheorem{axiom}{Axiom}
\theoremstyle{definition}
\newtheorem{definition}{Definition}
\theoremstyle{plain}

\definecolor{formationblue}{RGB}{5,73,185}
\definecolor{borderblue}{RGB}{103,157,244}
\definecolor{reusepurple}{RGB}{105,0,184}
\definecolor{invoketeal}{RGB}{0,132,119}
\definecolor{paleblue}{RGB}{235,247,255}
\definecolor{palepurple}{RGB}{247,240,255}
\definecolor{paleteal}{RGB}{236,252,249}
\definecolor{focusorange}{RGB}{201,109,0}
\definecolor{paleorange}{RGB}{255,246,232}

\definecolor{lifeblue}{RGB}{46,102,178}
\definecolor{lifebluefill}{RGB}{241,246,252}
\definecolor{lifeviolet}{RGB}{120,71,188}
\definecolor{lifevioletfill}{RGB}{246,242,251}
\definecolor{lifegreen}{RGB}{39,137,97}
\definecolor{lifegreenfill}{RGB}{239,249,244}
\definecolor{lifeorange}{RGB}{201,118,35}
\definecolor{lifeorangefill}{RGB}{252,247,240}
\definecolor{lifesoftgray}{RGB}{75,75,75}
\definecolor{lifemidgray}{RGB}{145,145,145}
\definecolor{lifelightgrayfill}{RGB}{248,248,248}

\title{Recursive Formation and First-Class Reuse of Compositional Capabilities}
\author{Hang Lin}
\date{October 2026}

\begin{document}

\maketitle

\begin{abstract}
We introduce \FormCraft, a framework for representing acquired operations as first-class capabilities that can be recursively composed, directly invoked, and selectively retained in a persistent library through domain-specific verification. The framework separates capability well-formedness and public callability from later admission to persistent storage while preserving formation provenance and operational identity. We formalize recursive compositional closure over a uniform capability space and show that formed composites can be reused through the same public interface without re-forming their historical derivation. Experiments across continuous, discrete, propositional, and sequence domains demonstrate instance-independent reuse, independent checking, persistent acquisition, and reduced reuse-time search work under first-class exposure.
\end{abstract}

\section{Introduction}
\label{sec:introduction}

Program synthesis turns specifications or examples into executable programs, using search, constraints, and learned guidance \cite{gulwani2017synthesis,alur2013sygus,gulwani2011flashfill,balog2017deepcoder,devlin2017robustfill}. Library-learning systems go further by extracting abstractions from solved tasks and making them useful for later synthesis \cite{ellis2021dreamcoder,bowers2023stitch,grand2024lilo}. These advances motivate a question about the resulting artifacts: once a useful operation has been formed, what makes it a reusable operational object rather than only a derivation or a search result, and how can that object later be retained?

We study that question as a capability lifecycle. A formed operation should be represented independently of the instance that elicited it, composed through compatible public contracts, and exposed for later invocation as an operation in its own right. Long-term persistence is a separate concern: a capability may subsequently be checked before it is retained in a persistent library. The framework therefore centers on representation, composition, persistence, and reuse after formation, while abstraction-discovery methods can serve as upstream sources of capabilities. In elementary arithmetic, an operation is more reusable than the example that first demonstrated it. The analogy captures the intended separation between a concrete instance and an operation that accepts new operands.

\FormCraft{} represents primitive capabilities $f$ and composite capabilities $F$ within one well-formed capability space $\mathfrak F$. Instance data $X$ and parameters $\alpha$ enter through public invocation ports, while a composite retains its formation provenance. Compatible recursive composition produces another first-class capability in the same space; verification is invoked only when a formed capability is considered for persistent library membership. Proposition~\ref{prop:compositional-closure} establishes finite compositional closure, and Corollary~\ref{cor:first-class-depth} isolates the consequence of first-class exposure for historical re-formation depth.

The evaluation progresses from instance-independent reuse to independently checked composition, persistent acquisition, and increasing-depth first-class reuse. Continuous and discrete prototypes examine instance separation; a Lean-checked propositional microdomain tests independently verifiable composites \cite{demoura2021lean4}. In the main sequence-domain study, sequential acquisition expands the persistent store from 12 to 26 and then 38 records. A paired reuse experiment over the increasing-depth task family reduces candidate attempts from 16,570 to 144 when persisted composites are exposed directly through their public interfaces. The paired study reports reuse-time search work under controlled snapshots.

The contributions are:
\begin{enumerate}
  \item A formal account of structural axioms and first-class compositional closure for instance-independent capabilities.
  \item A lifecycle that separates well-formed capability formation and first-class reuse from the later verification decision governing persistent library membership.
  \item A record-based implementation that gives primitive and composite capabilities one public operational interface while retaining operation identity and formation provenance.
  \item Cross-domain evaluation of instance-independent reuse and independently checked composition, together with sequential persistent-store growth and a paired comparison of first-class reuse with historical expansion.
\end{enumerate}

Section~\ref{sec:formulation} gives the formalization; Section~\ref{sec:framework} describes its implementation; Section~\ref{sec:evaluation} reports the evidence; and Section~\ref{sec:related} positions the work.

\section{Formal Model}
\label{sec:formulation}

\subsection{Primitive Capability and Operational Semantics}

Let $X\in\mathcal X$ be a concrete task instance and $\alpha\in\mathcal A$ an external invocation parameter. A primitive capability is denoted by $f$, a composite by $F$, and a generic capability of either kind by $G$. The space $\mathfrak F$ contains well-formed capability representations. Its members have declared input, parameter, and output contracts and share the execution form
\begin{equation}
\label{eq:operational-semantics}
  \llbracket G\rrbracket(X;\alpha)=Y,\qquad G\in\mathfrak F.
\end{equation}
A representation records an operation and its contract, while a particular invocation supplies its own $X$ and $\alpha$. Formation provenance is denoted by $P(G)$. Throughout, notation such as $X\notin f$ is representational shorthand: it means that the capability representation of $f$ contains no concrete invocation payload $X$, not that $f$ is literally a set. Behavioral equivalence $\sim_F$ means equality of observable denotations under a shared compatible contract; it is defined below.

\subsection{Foundational Axioms}
\label{sec:structural-axioms}

\begin{axiom}[Primitive representation and recursive formation]
\label{ax:structural-rules}
\begin{align*}
  \mathrm{A}_1\quad \textnormal{(Instance Exclusion)}
    &:\quad X\notin f,\\
  \mathrm{A}_2\quad \textnormal{(External Parameterization)}
    &:\quad \alpha\ \text{is external},\\
  \mathrm{A}_3\quad \textnormal{(Uniform Operational Interface)}
    &:\quad G:\mathcal X_G\times\mathcal A_G\longrightarrow\mathcal Y_G
      \quad(G=f\text{ or }F),\\
  \mathrm{A}_4\quad \textnormal{(Recursive Formation)}
    &:\quad G_i,G_j\in\mathfrak F,\ 
      \operatorname{Compatible}(G_i,G_j,\omega),\\[-1pt]
    &\phantom{:\quad}\Longrightarrow
      F=\operatorname{Compose}_{\omega}(G_i,G_j)\in\mathfrak F.
\end{align*}
\end{axiom}

Axiom $\mathrm{A}_1$ excludes a concrete instance payload from the primitive representation without forbidding learning from instances. Axiom $\mathrm{A}_2$ permits a stored parameter schema but places invocation-specific values at public ports. By $\mathrm{A}_3$, a primitive and composite differ in formation history, not public operational status. The terminology follows the programming-language tradition of treating procedures and functions as first-class values \cite{strachey2000fundamental}; within the capability space, we use \emph{first-class} to mean that a formed composite is exposed through the same public interface as other capabilities: it can be directly invoked, retained, and supplied as an operand to later compatible formation.

In $\mathrm{A}_4$, compatibility covers contracts, ports, and the static rule $\omega$. The constructor records capability references, contracts, wiring, and provenance, but captures neither runtime $X$ nor $\alpha$. Either parent may already be composite, allowing $f\to F_1\to F_2\to\cdots$. Finite closure of this recursive rule is established separately by Proposition~\ref{prop:compositional-closure}.

\subsection{Recursive Composition and Semantic Associativity}

\begin{definition}[Compositional denotation]
\label{def:compositional-denotation}
For a well-formed composite $F=\operatorname{Compose}_{\omega}(G_i,G_j)$, its denotation is induced by the denotations of its parents and the compatible static wiring $\omega$:
\begin{equation}
\label{eq:composition-semantics}
  \llbracket F\rrbracket
  =\llbracket G_i\rrbracket\circ_{\omega}\llbracket G_j\rrbracket.
\end{equation}
Here $\circ_{\omega}$ denotes the behavior induced by the typed wiring rule rather than unconditional ordinary function composition.
\end{definition}

For any composite capability expression $E\in\mathfrak F^{*}$, let $W(E)$ denote its \emph{flattened typed wiring}: the public-port dependency structure obtained by recursively expanding composite nodes while preserving their external contracts. This flattening removes only the internal compositional nesting from the wiring view; formation provenance $P(E)$, parent references, and the recorded formation DAG remain recorded separately. We write $\circ$ below for the appropriate compatible $\operatorname{Compose}_{\omega}$ at each node.

\begin{proposition}[Semantic associativity under wiring equivalence]
\label{prop:semantic-associativity}
Suppose both $(G_1\circ G_2)\circ G_3$ and $G_1\circ(G_2\circ G_3)$ are contract-compatible and have the same flattened typed wiring,
\[
  W\bigl((G_1\circ G_2)\circ G_3\bigr)
  =
  W\bigl(G_1\circ(G_2\circ G_3)\bigr).
\]
Then
\[
  (G_1\circ G_2)\circ G_3\sim_F
  G_1\circ(G_2\circ G_3).
\]
Their formation provenance and representation identities may nevertheless differ.
\end{proposition}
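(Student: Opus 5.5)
The plan is to reduce both sides to a single denotation determined by the flattened typed wiring. The key intermediate claim is a \emph{flattening lemma}: for every composite expression $E\in\mathfrak F^{*}$, the denotation $\llbracket E\rrbracket$ equals the behavior induced by executing $W(E)$ on the public ports. That is, it equals the map obtained by feeding $X$ and $\alpha$ into the external ports of $W(E)$, evaluating each leaf capability's denotation $\llbracket G\rrbracket(X_G;\alpha_G)$ at its wired inputs, and reading the designated output. Once this is available, the proposition follows in one line:
\[
  \llbracket (G_1\circ G_2)\circ G_3\rrbracket
  =\mathrm{Exec}\bigl(W((G_1\circ G_2)\circ G_3)\bigr)
  =\mathrm{Exec}\bigl(W(G_1\circ (G_2\circ G_3))\bigr)
  =\llbracket G_1\circ(G_2\circ G_3)\rrbracket .
\]
Both expressions are contract-compatible by hypothesis, so they share a compatible contract. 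Therefore $\sim_F$ holds.

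I will prove the flattening lemma by structural induction on $E$, using Definition~\ref{def:compositional-denotation} at each node. In the base case $E=G$ is a single capability and $W(G)$ is the trivial wiring exposing $G$'s own ports. Execution then returns $\llbracket G\rrbracket(X;\alpha)$, which is exactly the uniform interface of $\mathrm{A}_3$. For the inductive step, take $E=\operatorname{Compose}_\omega(E_i,E_j)$. By definition, $\llbracket E\rrbracket=\llbracket E_i\rrbracket\circ_\omega\llbracket E_j\rrbracket$, and by the induction hypothesis each factor is $\mathrm{Exec}(W(E_i))$ or $\mathrm{Exec}(W(E_j))$. By construction, $W(E)$ is obtained by splicing $W(E_i)$ and $W(E_j)$ along the port connections prescribed by $\omega$. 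Two facts then give the identity $\mathrm{Exec}(W(E))=\mathrm{Exec}(W(E_i))\circ_\omega\mathrm{Exec}(W(E_j))$. First, $\circ_\omega$ is by definition the behavior induced by that typed wiring. Second, dataflow evaluation of a spliced acyclic wiring is compositional. Axiom $\mathrm{A}_2$ and the remark after $\mathrm{A}_4$ (the constructor captures neither $X$ nor $\alpha$) guarantee that no hidden instance state enters at inner nodes. All inputs therefore arrive through the public ports represented in $W$, and nothing outside $W(E)$ affects execution.

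The main obstacle is that the paper defines $\circ_\omega$ only informally, as the "behavior induced by the typed wiring rule". The splicing identity above is therefore something I must fix as the precise reading of Definition~\ref{def:compositional-denotation} rather than derive. My first attempt will be to state explicitly that $\llbracket\cdot\rrbracket\circ_\omega\llbracket\cdot\rrbracket$ is the dataflow evaluation of the wiring obtained by connecting the two parents' port graphs according to $\omega$. Compositionality of dataflow evaluation then follows by a routine topological-order argument on the acyclic flattened graph. I also have to check that evaluation depends only on the flattened structure and the leaf denotations, not on how nodes are grouped. This holds because evaluation assigns each wire a value determined by its source port alone.

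Finally, I will note why the proposition ends with its caveat. The hypothesis concerns only $W$, which by construction forgets nesting, while $P(E)$ and the formation DAG record the grouping. The two expressions therefore generally have distinct provenance and representation identities, so the conclusion is behavioral equivalence rather than equality in $\mathfrak F$.
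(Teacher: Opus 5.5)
Your proposal is correct and follows the paper's argument: the paper's proof observes that recursively expanding either parenthesization via Definition~\ref{def:compositional-denotation} evaluates the same parent denotations along the same flattened typed wiring, which is your flattening lemma $\llbracket E\rrbracket=\mathrm{Exec}(W(E))$ applied to both sides. Your structural induction makes explicit the reading of $\circ_\omega$ as dataflow evaluation of the spliced wiring, which the paper leaves implicit; note also that the shared contract needed for $\sim_F$ comes from the equality of $W$, which preserves external contracts, rather than from each side merely being contract-compatible.
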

\begin{proof}
By Definition~\ref{def:compositional-denotation}, recursively expanding either well-formed parenthesization evaluates the same parent denotations along the same flattened typed wiring. Their root denotations are therefore equal on the shared compatible contract, which is precisely behavioral equivalence under $\sim_F$. The recorded formation DAGs need not coincide.
\end{proof}
This associativity statement is semantic and conditional on wiring equivalence. It is independent of the finite-closure result below: recursive compatible formation yields closure by induction, while semantic associativity identifies compatible regroupings that realize the same flattened operation.

\subsection{Finite Compositional Closure}
\label{sec:compositional-closure}

Let $\mathfrak F_0\subseteq\mathfrak F$ be the primitives. Define
\begin{equation}
\label{eq:formation-levels}
  \mathfrak F_{n+1}
  =\mathfrak F_n\cup
  \left\{
    \operatorname{Compose}_{\omega}(G_i,G_j)
    \;\middle|\;
    \begin{gathered}
      G_i,G_j\in\mathfrak F_n,\\
      \operatorname{Compatible}(G_i,G_j,\omega)
    \end{gathered}
  \right\}.
\end{equation}
and $\mathfrak F^{*}=\bigcup_{n\geq0}\mathfrak F_n$.

\begin{proposition}[Finite compositional closure]
\label{prop:compositional-closure}
Every composite $F$ obtained from primitives by finite compatible formation belongs to $\mathfrak F^{*}\subseteq\mathfrak F$ and may serve as a parent in another compatible formation.
\end{proposition}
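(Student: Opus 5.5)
The plan is an induction on formation depth, using the level hierarchy \eqref{eq:formation-levels} together with Axiom~$\mathrm{A}_4$. First I make ``obtained from primitives by finite compatible formation'' precise. Such a composite $F$ comes with a finite formation derivation: a finite binary tree (or its recorded formation DAG) whose leaves are primitives in $\mathfrak F_0$. Each internal node is labelled $\operatorname{Compose}_{\omega}(G_i,G_j)$, where its two children satisfy $\operatorname{Compatible}(G_i,G_j,\omega)$. I define the depth $d(F)$ as the height of this tree, with $d(f)=0$ for primitives. Finiteness of the derivation guarantees that $d(F)$ is a natural number.

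The first claim is $F\in\mathfrak F_{d(F)}$, proved by strong induction on depth. The base case is immediate, since $\mathfrak F_0$ is the set of primitives. For the inductive step, write $F=\operatorname{Compose}_{\omega}(G_i,G_j)$ with children of depths $d_i,d_j<d(F)$. By the induction hypothesis, $G_i\in\mathfrak F_{d_i}$ and $G_j\in\mathfrak F_{d_j}$. The definition $\mathfrak F_{n+1}=\mathfrak F_n\cup\{\cdots\}$ makes the levels monotone, so both children lie in $\mathfrak F_{n}$ with $n=\max(d_i,d_j)=d(F)-1$. The recorded compatibility then places $F$ in the second set of \eqref{eq:formation-levels}, hence in $\mathfrak F_{d(F)}\subseteq\mathfrak F^{*}$. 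This step depends on monotonicity, because the two parents may have unequal depths.

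The inclusion $\mathfrak F^{*}\subseteq\mathfrak F$ is proved by induction on $n$, showing $\mathfrak F_n\subseteq\mathfrak F$ for every $n$. The base case holds because $\mathfrak F_0\subseteq\mathfrak F$ by hypothesis. For the step, each new element of $\mathfrak F_{n+1}$ has the form $\operatorname{Compose}_{\omega}(G_i,G_j)$ with $G_i,G_j\in\mathfrak F_n\subseteq\mathfrak F$ compatible, so Axiom~$\mathrm{A}_4$ gives membership in $\mathfrak F$. Taking the union over $n$ yields the inclusion.

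Finally, for the parent-eligibility claim, take $F\in\mathfrak F_n$ and any $G$ with $G\in\mathfrak F_m$ and $\operatorname{Compatible}(F,G,\omega')$ (or with the roles of $F$ and $G$ swapped). Both lie in $\mathfrak F_{\max(n,m)}$, so $\operatorname{Compose}_{\omega'}(F,G)\in\mathfrak F_{\max(n,m)+1}\subseteq\mathfrak F^{*}$. Axiom~$\mathrm{A}_3$ supplies the same public contract for $F$ as for a primitive, so nothing beyond compatibility is required for $F$ to act as a parent; this is exactly the first-class status.

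I expect the main obstacle to be definitional rather than technical. The statement ``may serve as a parent'' has to be read conditionally: $F$ is an admissible operand whenever a compatible partner and wiring exist, since the statement does not assert that such a partner always exists. The derivation also has to be treated as finite, so that the depth is well defined even if the formation DAG shares subterms. Sharing causes no difficulty, because depth can be computed as the longest path in the DAG, and the induction above uses only that each child's depth is strictly smaller than its parent's.
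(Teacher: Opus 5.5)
Your proof is correct and follows the paper's argument. Like the paper, you induct over the finite formation DAG, place both parents in a common level $\mathfrak F_n$ and their compatible composition in $\mathfrak F_{n+1}$, and then reapply the recursive formation rule for parent eligibility; you are more explicit than the paper about the monotonicity of the levels and about deriving $\mathfrak F^{*}\subseteq\mathfrak F$ from $\mathrm{A}_4$ separately from level membership, which the paper compresses into a single appeal to the axiom.
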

\begin{proof}
Primitive leaves lie in $\mathfrak F_0$. If two parents have finite depth, they lie in a common $\mathfrak F_n$; Axiom~\ref{ax:structural-rules} places their compatible one-step composition in $\mathfrak F_{n+1}$. Induction over a finite formation DAG proves the root claim, after which the same recursive formation rule permits another compatible composition.
\end{proof}
Proposition~\ref{prop:semantic-associativity} supplies a separate semantic fact: when two compatible regroupings have the same flattened typed wiring, they lie in the same behavioral equivalence class even though their formation DAGs may differ.

\subsection{First-Class Consequences}

\begin{corollary}[Preserved instance exclusion]
\label{cor:instance-exclusion}
For every finite composite $F$, $X\notin F$ in its representation.
\end{corollary}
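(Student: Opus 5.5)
The plan is structural induction on formation level, using the stratification $\mathfrak F^{*}=\bigcup_{n\ge 0}\mathfrak F_n$ from \eqref{eq:formation-levels}. By Proposition~\ref{prop:compositional-closure}, every finite composite $F$ lies in some $\mathfrak F_n$. It therefore suffices to prove, by induction on $n$, the statement $\mathcal P(n)$: for every $G\in\mathfrak F_n$, the representation of $G$ contains no concrete invocation payload $X$. Here "contains" is read in the representational sense fixed after \eqref{eq:operational-semantics}.

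The base case $n=0$ is immediate. Members of $\mathfrak F_0$ are primitives, and Axiom~$\mathrm{A}_1$ gives $X\notin f$ directly.

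For the inductive step, let $G\in\mathfrak F_{n+1}\setminus\mathfrak F_n$. Then $G=\operatorname{Compose}_{\omega}(G_i,G_j)$ with $G_i,G_j\in\mathfrak F_n$. I would enumerate what the representation of such a composite consists of, as described after Axiom~\ref{ax:structural-rules}:
\begin{enumerate}
  \item capability references to $G_i$ and $G_j$;
  \item the declared input, parameter, and output contracts;
  \item the static wiring rule $\omega$;
  \item the provenance $P(G)$.
\end{enumerate}
I would then check that none of these can carry $X$.
\begin{enumerate}
  \item The references point to representations that are $X$-free by the inductive hypothesis $\mathcal P(n)$.
  \item Contracts are type-level declarations over $\mathcal X_G$, not elements of it, by $\mathrm{A}_3$.
  \item The rule $\omega$ is static, and compatibility is decided before any invocation.
  \item The constructor is stated to capture neither runtime $X$ nor $\alpha$. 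Axiom~$\mathrm{A}_2$ places invocation values at public ports rather than in the stored record.
\end{enumerate}
So the composite record is assembled entirely from $X$-free components, which gives $\mathcal P(n+1)$.

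The main obstacle is the provenance $P(G)$. Provenance is the one component that might plausibly record the instance that elicited the formation, and the axioms do not explicitly restrict its content. I would first try to argue from the constructor clause in $\mathrm{A}_4$: provenance records capability references, wiring, and the formation DAG, all of which are symbolic. If that proves insufficient, I would state the result relative to the representation proper, with contracts, wiring, and parent references, in the same sense that $\mathrm{A}_1$ applies to primitives.

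A secondary check is that flattening via $W$ or regrouping via Proposition~\ref{prop:semantic-associativity} introduces no payload. Neither operation adds new data beyond existing references and contracts, so the property is invariant under both.
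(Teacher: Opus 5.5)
Your proposal is correct and follows the paper's argument: the paper's proof is the compressed form of your induction over $\mathfrak F_n$, with the base case from $\mathrm{A}_1$ and the step from the fact that each constructor stores only static formation information. Your worry about $P(G)$ is settled by that same constructor clause, since the paper states the constructor records capability references, contracts, wiring, and provenance while capturing neither runtime $X$ nor $\alpha$; so the fallback weakening and the extra check on $W$ and regrouping are not needed.
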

The primitive base satisfies $\mathrm{A}_1$, and each constructor stores only static formation information.

\begin{corollary}[Preserved external parameterization]
\label{cor:external-parameterization}
A finite composite receives invocation-specific $\alpha$ externally:
$\llbracket F\rrbracket(X;\alpha)=Y$ for a contract-valid call.
\end{corollary}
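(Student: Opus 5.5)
The plan is structural induction on the formation level $n$ for which $F\in\mathfrak F_n$. By Proposition~\ref{prop:compositional-closure}, every finite composite lies in some $\mathfrak F_n\subseteq\mathfrak F^{*}$. This gives a well-founded index, and I will strengthen the claim to the following invariant for every $G\in\mathfrak F_n$: the parameter $\alpha$ occurs in no representation field of $G$, and for every contract-valid $(X,\alpha)\in\mathcal X_G\times\mathcal A_G$ the value $\llbracket G\rrbracket(X;\alpha)$ is a well-defined $Y\in\mathcal Y_G$ obtained by supplying $\alpha$ at the public parameter port. Proving this for both primitives and composites gives the corollary by restricting to composites.

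In the base case $n=0$, $G=f$ is primitive. Axiom $\mathrm{A}_2$ says $\alpha$ is external: only a schema is stored, and the value arrives at a public port. Axiom $\mathrm{A}_3$ gives the typed execution form $f:\mathcal X_f\times\mathcal A_f\to\mathcal Y_f$, so $\llbracket f\rrbracket(X;\alpha)=Y$ for every contract-valid call.

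In the inductive step, let $F=\operatorname{Compose}_{\omega}(G_i,G_j)$ with $G_i,G_j\in\mathfrak F_n$ satisfying the invariant. By the remark after Axiom $\mathrm{A}_4$, the constructor records only references, contracts, wiring $\omega$ and provenance $P(F)$, and captures no runtime $\alpha$. Since $G_i$ and $G_j$ store no $\alpha$, neither does $F$. This mirrors the argument for Corollary~\ref{cor:instance-exclusion}. For the denotation, Definition~\ref{def:compositional-denotation} gives $\llbracket F\rrbracket=\llbracket G_i\rrbracket\circ_{\omega}\llbracket G_j\rrbracket$. A contract-valid call $(X;\alpha)$ to $F$ is routed by the static wiring $\omega$ into contract-valid calls $(X_j;\alpha_j)$ of $G_j$ and $(X_i;\alpha_i)$ of $G_i$. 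Here each $\alpha_\bullet$ is a projection of the external $\alpha$ and each $X_\bullet$ is a projection of $X$ or an upstream output. The induction hypothesis makes each parent call well-defined with its parameter received externally, so $F$ yields $Y\in\mathcal Y_F$.

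The main obstacle is this routing step. It requires that $\operatorname{Compatible}(G_i,G_j,\omega)$ guarantees two things: the external parameter contract $\mathcal A_F$ decomposes into the parents' parameter ports, and outputs of $G_j$ wired into $G_i$ land in $\mathcal X_{G_i}$ (or $\mathcal A_{G_i}$), so that contract-validity propagates to the sub-calls. I would make this explicit by reading "compatibility covers contracts, ports, and the static rule $\omega$" as exactly these typing conditions on $\omega$. The one case needing care is $\omega$ feeding an output into a parameter port of the other parent. That value is then computed at invocation time rather than stored in the representation, so the externality of $\alpha$ for $F$ is still preserved.
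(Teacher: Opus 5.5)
Your proposal is correct and takes essentially the paper's approach: the paper's justification is the one-line observation that formation retains parameter contracts without capturing a particular call's value, and your induction is a faithful expansion of it (base case from $\mathrm{A}_2$, inductive step from the constructor remark after $\mathrm{A}_4$, mirroring Corollary~\ref{cor:instance-exclusion}). The routing step you flag as the main obstacle is not actually needed, since the contract-valid execution form $\llbracket F\rrbracket(X;\alpha)=Y$ already follows from $F\in\mathfrak F^{*}\subseteq\mathfrak F$ (Proposition~\ref{prop:compositional-closure}) together with $\mathrm{A}_3$ and the definition of $\mathfrak F$; you therefore do not need the extra typing reading of $\operatorname{Compatible}$.
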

Formation retains parameter contracts without capturing a particular call's value.

\begin{corollary}[Direct first-class invocation]
\label{cor:direct-invocation}
Each $F\in\mathfrak F^{*}$ has the public form of $\mathrm{A}_3$ and can accept a contract-valid new instance directly:
\begin{equation}
\label{eq:direct-invocation}
  (X,\alpha)\longrightarrow F\longrightarrow Y.
\end{equation}
\end{corollary}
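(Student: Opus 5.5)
The plan is a structural induction on the formation level $n$ in \eqref{eq:formation-levels}, with the statement strengthened to carry everything needed at the inductive step: every $G\in\mathfrak F_n$ has a declared contract $(\mathcal X_G,\mathcal A_G,\mathcal Y_G)$, and $\llbracket G\rrbracket$ is a total map on contract-valid pairs $(X,\alpha)$. From this strengthened claim, \eqref{eq:direct-invocation} follows immediately: any contract-valid new instance $X$, together with an external $\alpha$, is accepted and yields $Y$.

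\textbf{Base case.} For $n=0$, every $G\in\mathfrak F_0$ is a primitive $f$. Axiom $\mathrm{A}_3$ supplies the signature $f:\mathcal X_f\times\mathcal A_f\to\mathcal Y_f$, and \eqref{eq:operational-semantics} supplies its execution. Axiom $\mathrm{A}_1$ guarantees that $f$ does not depend on any stored $X$, so a fresh instance is admissible.

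\textbf{Inductive step.} Take $F=\operatorname{Compose}_\omega(G_i,G_j)$ with $G_i,G_j\in\mathfrak F_n$. By Proposition~\ref{prop:compositional-closure}, $F\in\mathfrak F$, so $F$ has declared contracts. Its public input and parameter ports are the unwired external ports of $G_i$ and $G_j$ as fixed by $\omega$. Its output port is the designated output of the wiring.

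Definition~\ref{def:compositional-denotation} gives $\llbracket F\rrbracket=\llbracket G_i\rrbracket\circ_\omega\llbracket G_j\rrbracket$. To evaluate on a contract-valid $(X,\alpha)$:
\begin{enumerate}
  \item Route the components of $(X,\alpha)$ to the parents' ports as $\omega$ prescribes.
  \item Invoke each parent directly, which the inductive hypothesis permits.
  \item Pass the intermediate outputs along the wiring and read off $Y$.
\end{enumerate}
Corollaries~\ref{cor:instance-exclusion} and~\ref{cor:external-parameterization} ensure that $F$ captures no prior $X$ or $\alpha$. The call is therefore fully determined by the supplied pair, so a new instance is accepted exactly as for a primitive. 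Since $\mathfrak F^{*}=\bigcup_n\mathfrak F_n$, the claim then holds for every $F\in\mathfrak F^{*}$.

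\textbf{Main obstacle.} The hardest step is showing that routing $(X,\alpha)$ through $\omega$ hands each parent a contract-valid input, so that the inductive hypothesis actually applies. This is where $\operatorname{Compatible}(G_i,G_j,\omega)$ must do the work. I would first read compatibility as asserting that each wired output type is contained in the receiving input contract. I would also read it as asserting that $F$'s declared external contract is the restriction of the parents' contracts to the unwired ports. If the paper's notion of compatibility is weaker than this, I would make these two conditions explicit as the content of the static rule $\omega$. In the same step I would note that the invocation does not consult $P(F)$ and does not re-form $F$: it uses only $F$'s own recorded wiring and its parents' public interfaces, which is the sense of \emph{direct} invocation.
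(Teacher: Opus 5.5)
Your argument works under your reading of $\operatorname{Compatible}$, but it takes a longer route than the paper. The paper's justification is essentially one step. Proposition~\ref{prop:compositional-closure} gives $F\in\mathfrak F^{*}\subseteq\mathfrak F$. Axiom~$\mathrm{A}_3$ and Equation~\eqref{eq:operational-semantics} are stated for every member of $\mathfrak F$, composites included ($G=f$ or $F$), so $F$ already carries the public form $\mathcal X_F\times\mathcal A_F\to\mathcal Y_F$. Corollaries~\ref{cor:instance-exclusion} and~\ref{cor:external-parameterization} keep $X$ and $\alpha$ at the public ports. The paper's remark that formation history is retained but not required for invocation (while the body may still evaluate child operations) accounts for \emph{directly}.

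You actually reach $F\in\mathfrak F$ at the start of your inductive step, and applying $\mathrm{A}_3$ to $F$ there would finish the proof. Your ``main obstacle'' arises only because you use $\mathrm{A}_3$ for primitives alone and then re-derive the composite's contract from its parents through $\omega$.

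Your induction buys a semantic account of why the stored executable body realizes the declared contract. It makes $\mathrm{A}_3$ for composites a consequence of $\mathrm{A}_3$ for primitives plus type-correct wiring, which the paper leaves implicit. The cost is two properties of $\operatorname{Compatible}$ that the paper never states: wired outputs land inside the receiving input contracts, and $F$'s external contract is the restriction to unwired ports. As written, your version therefore proves the corollary under slightly stronger hypotheses than the paper needs.
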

Formation history is retained, but it is not required for public invocation. The executable body may still evaluate child operations internally.

\begin{corollary}[Recursive reusability]
\label{cor:recursive-reuse}
If a formed composite $F\in\mathfrak F^{*}$ and another capability $G\in\mathfrak F$ are contract-compatible under a compatible wiring rule $\omega$, then
\[
  \operatorname{Compose}_{\omega}(F,G)\in\mathfrak F.
\]
Either parent may itself be composite; formation creates no lower-status composite class.
\end{corollary}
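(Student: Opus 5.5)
The plan is to reduce Corollary~\ref{cor:recursive-reuse} to a single application of Axiom~$\mathrm{A}_4$, after checking that the formed composite $F$ is an admissible parent. First, since $F\in\mathfrak F^{*}$, Proposition~\ref{prop:compositional-closure} gives $F\in\mathfrak F^{*}\subseteq\mathfrak F$. Concretely, $F\in\mathfrak F_n$ for some finite $n$, and the level construction in~\eqref{eq:formation-levels} keeps every $\mathfrak F_n$ inside $\mathfrak F$. By hypothesis $G\in\mathfrak F$ and $\operatorname{Compatible}(F,G,\omega)$. The premises of $\mathrm{A}_4$ therefore hold with $G_i=F$ and $G_j=G$, and the axiom yields $\operatorname{Compose}_{\omega}(F,G)\in\mathfrak F$ directly.

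For the clause that formation creates no lower-status class, I will argue from $\mathrm{A}_3$ and Corollary~\ref{cor:direct-invocation}. The only premises $\mathrm{A}_4$ places on a parent are membership in $\mathfrak F$ and compatibility of its public contracts, ports, and wiring rule. None of these refers to formation history $P(\cdot)$ or to whether the parent is primitive. Corollary~\ref{cor:direct-invocation} shows that $F$ exposes exactly the public form $\mathcal X_F\times\mathcal A_F\to\mathcal Y_F$ of $\mathrm{A}_3$. Hence $F$ enters the compatibility check on the same footing as a primitive, and the argument is unchanged whether $G$ is primitive or composite, or whether $F$ is placed as the first or second operand.

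Optionally, I will also show that the new composite stays inside $\mathfrak F^{*}$ when $G\in\mathfrak F^{*}$. Take levels $F\in\mathfrak F_m$ and $G\in\mathfrak F_k$. Both lie in $\mathfrak F_{\max(m,k)}$ because the levels are nested. Then~\eqref{eq:formation-levels} places the composition in $\mathfrak F_{\max(m,k)+1}$. If $G$ is merely in $\mathfrak F$ and not formed from primitives, only the weaker conclusion in $\mathfrak F$ is claimed, which matches the statement.

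The main obstacle is not the inference itself but making sure the compatibility hypothesis is stated on public contracts alone. If $\operatorname{Compatible}$ could inspect internal wiring or provenance, a composite might be judged differently from a primitive with the same contract, and the ``no lower-status class'' clause would fail. I would address this by appealing to the description of $\mathrm{A}_4$, where compatibility ranges only over contracts, ports, and $\omega$. I would also note that the constructor captures neither $X$ nor $\alpha$, which via Corollaries~\ref{cor:instance-exclusion} and~\ref{cor:external-parameterization} keeps the result a well-formed first-class member.
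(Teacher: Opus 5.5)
Your proposal is correct and is essentially the paper's argument. The paper gives no separate proof: it relies on Proposition~\ref{prop:compositional-closure} to place $F$ in $\mathfrak F^{*}\subseteq\mathfrak F$, then on one application of $\mathrm{A}_4$ (``the same recursive formation rule permits another compatible composition''), and reads the no-lower-status clause off $\mathrm{A}_3$, exactly as you do. One small caution: Corollaries~\ref{cor:instance-exclusion} and~\ref{cor:external-parameterization} are stated only for finite composites, so when $G\notin\mathfrak F^{*}$ the well-formedness of $\operatorname{Compose}_{\omega}(F,G)$ must come from $\mathrm{A}_4$ itself, as your main argument already has it, not from those corollaries.
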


Figure~\ref{fig:first-class-reuse} makes the distinction between formation history and later operational reuse explicit. The left panel shows one compatible recursive history in which $G_5$ is itself a formed capability and formation later continues to $G_6$. The right panel reuses that same $G_5$ as a single public capability. The orange styling only highlights the capability being followed across the two panels; it does not denote a different operational type. The historical subgraph and its provenance remain recorded, but they need not re-enter the public search view as formation decisions during later reuse.

\begin{figure}[t]
\centering
\resizebox{\linewidth}{!}{%
\begin{tikzpicture}[
  >=Stealth,
  formationnode/.style={
    draw=formationblue, fill=paleblue,
    rounded corners=2.5pt, line width=1pt,
    minimum width=1.32cm, minimum height=0.90cm,
    inner sep=2pt, font=\Large
  },
  focusnode/.style={
    draw=focusorange, fill=paleorange,
    rounded corners=2.5pt, line width=1pt,
    minimum width=1.32cm, minimum height=0.90cm,
    inner sep=2pt, font=\Large
  },
  invocationnode/.style={
    draw=invoketeal, fill=paleteal,
    rounded corners=2.5pt, line width=1pt,
    minimum width=1.32cm, minimum height=0.90cm,
    inner sep=2pt, font=\Large
  },
  formationedge/.style={
    -{Stealth[length=5pt,width=5pt]},
    draw=formationblue, line width=1pt
  },
  reuseedge/.style={
    -{Stealth[length=5pt,width=5pt]},
    draw=reusepurple, line width=1pt
  },
  invocationedge/.style={
    -{Stealth[length=5pt,width=5pt]},
    draw=invoketeal, line width=1pt
  }
]

\path[fill=white] (0,0) rectangle (17.9,8.9);

\node[font=\LARGE] at (8.95,8.18)
  {$\operatorname{Compose}_{\omega}:
    \mathfrak F\times\mathfrak F
    \rightharpoonup\mathfrak F$};

\path[fill=paleblue,rounded corners=12pt]
  (0.28,1.72) rectangle (9.95,6.66);

\draw[
  draw=borderblue, line width=0.85pt,
  dash pattern=on 6pt off 5pt,
  rounded corners=12pt
] (0.28,1.72) rectangle (9.95,6.66);

\node[formationnode] (G1) at (1.30,5.55) {$G_1$};
\node[formationnode] (G2) at (1.30,2.55) {$G_2$};
\node[formationnode] (G3) at (4.20,5.92) {$G_3$};
\node[formationnode] (G4) at (3.83,4.00) {$G_4$};

\node[focusnode] (G5old) at (6.62,4.20) {$G_5$};
\node[formationnode] (G6) at (8.78,3.62) {$G_6$};

\draw[formationedge]
  (G1.east) .. controls (2.42,5.52) and (2.64,4.80) ..
  (G4.north west);

\draw[formationedge]
  (G2.east) .. controls (2.45,2.64) and (2.72,3.32) ..
  (G4.south west);

\draw[formationedge]
  (G3.east) .. controls (5.30,5.88) and (5.78,4.96) ..
  (G5old.north west);

\draw[formationedge]
  (G4.east) .. controls (5.00,3.90) and (5.48,4.05) ..
  (G5old.west);

\draw[formationedge]
  (G5old.east) .. controls (7.25,3.92) and (7.88,3.76) ..
  (G6.west);

\draw[formationedge]
  (G2.east) .. controls (3.82,2.10) and (6.82,2.18) ..
  (G6.south west);

\node[focusnode] (G5reuse) at (14.65,4.30) {$G_5$};

\draw[
  -{Stealth[length=5pt,width=5pt]},
  draw=reusepurple, line width=1pt,
  dash pattern=on 6pt off 5pt
]
  (G5old.north east) .. controls (10.15,5.95) and (12.85,5.60) ..
  (G5reuse.north west);

\node[text=reusepurple,font=\large]
  at (11.40,5.95) {same $G_5$};

\node[text=invoketeal,font=\large\bfseries]
  at (15.10,5.48) {direct invocation};

\node[invocationnode] (input) at (12.05,4.30) {$(X,\alpha)$};
\node[invocationnode] (output) at (17.10,4.30) {$Y$};

\draw[invocationedge] (input.east) -- (G5reuse.west);
\draw[invocationedge] (G5reuse.east) -- (output.west);

\node[text=reusepurple,font=\large\bfseries]
  at (14.55,2.65) {composition};

\node[formationnode] (G7) at (12.95,1.65) {$G_7$};
\node[formationnode] (G8) at (16.18,1.78) {$G_8$};

\draw[reuseedge]
  (G5reuse.south east)
  .. controls (15.35,3.75) and (15.82,3.40) ..
  (G8.north);

\draw[reuseedge]
  (G7.east) .. controls (14.10,1.58) and (15.00,1.58) ..
  (G8.west);

\node[text=formationblue,font=\Large\bfseries]
  at (5.05,0.74) {(a) Historical formation};

\node[text=reusepurple,font=\Large\bfseries]
  at (14.45,0.74) {(b) Reuse in a later task};

\end{tikzpicture}%
}
\caption{Recursive formation and first-class reuse of an intermediate capability. All boxes denote generic capabilities $G\in\mathfrak F$; the orange styling only highlights the same $G_5$ across the two panels and does not denote a distinct operational type. In (a), compatible recursive composition forms $G_5$ and later continues to $G_6$. In (b), the same $G_5$ is exposed directly in a later task: it can accept a new invocation $(X,\alpha)$ or participate in further compatible composition with $G_7$ to form $G_8$, without re-forming its historical subgraph. The recorded formation DAG and provenance remain retained.}
\label{fig:first-class-reuse}
\end{figure}

\begin{corollary}[Historical re-formation depth under first-class reuse]
\label{cor:first-class-depth}
If $F$ is available to search through its public interface, let $h_{\mathrm{form}}(F)$ denote its historical formation depth, $h_{\mathrm{public}}(F)$ its public search representation depth, and $h_{\mathrm{reform}}(F)$ the new depth required solely to reconstruct its historical chain. For any finite $h_{\mathrm{form}}(F)$,
\begin{equation}
\label{eq:reformation-depth}
  h_{\mathrm{public}}(F)=1,\qquad h_{\mathrm{reform}}(F)=0.
\end{equation}
\end{corollary}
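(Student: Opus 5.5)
The plan is to make the three depth notions precise enough that the corollary reduces to Corollary~\ref{cor:direct-invocation} together with Corollary~\ref{cor:recursive-reuse}. I will model the public search view as a set $\Sigma$ of atomic search nodes. Each element of $\Sigma$ is a capability in $\mathfrak F$ exposed through its $\mathrm{A}_3$ interface. A search expression is a finite tree whose leaves are elements of $\Sigma$ and whose internal nodes are compatible $\operatorname{Compose}_{\omega}$ steps. The \emph{public search representation depth} $h_{\mathrm{public}}(F)$ is then the height, counted in nodes, of the smallest such tree whose denotation realizes $F$ under $\sim_F$. The historical depth $h_{\mathrm{form}}(F)$ is instead the height of the recorded formation DAG $P(F)$, which is finite by Proposition~\ref{prop:compositional-closure}. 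Finally, $h_{\mathrm{reform}}(F)$ counts the composition steps a search must newly perform solely in order to rebuild that DAG.

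The first claim, $h_{\mathrm{public}}(F)=1$, follows in two steps.
\begin{enumerate}
\item By hypothesis $F$ is available to search through its public interface, so $F\in\Sigma$.
\item By Corollary~\ref{cor:direct-invocation}, $F$ has the public form $(X,\alpha)\to F\to Y$. A single leaf labelled $F$ is therefore a well-formed search expression with $\llbracket F\rrbracket$ as its denotation, giving height $1$. No expression has height $0$, so $1$ is also the minimum.
\end{enumerate}
Corollary~\ref{cor:recursive-reuse} adds that this leaf can serve as an operand of any further compatible formation. It can therefore appear in larger expressions at unit cost, regardless of whether it is composite.

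For $h_{\mathrm{reform}}(F)=0$, I would argue that no composition step in the minimal expression is devoted to reconstructing $P(F)$. The minimal expression contains no internal nodes at all. Internal evaluation of child operations happens inside $\llbracket F\rrbracket$, and provenance is stored separately in the record, so neither enters the search expression as a formation decision. This is the remark after Corollary~\ref{cor:direct-invocation}. The count is therefore $0$ for every finite value of $h_{\mathrm{form}}(F)$. Finiteness is needed only so that $P(F)$ and $F\in\mathfrak F^{*}$ are well defined.

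The main obstacle is definitional rather than technical: I must show that $h_{\mathrm{reform}}$ is measured against the public view, and not against a search that ignores $\Sigma$. If a search is required to expand $F$ historically, the reconstruction depth would equal $h_{\mathrm{form}}(F)-1$ rather than $0$. I would handle this by stating explicitly that the corollary compares the two views under the first-class exposure hypothesis, and by noting that Proposition~\ref{prop:semantic-associativity} makes the choice of regrouping irrelevant to the denotation being realized.
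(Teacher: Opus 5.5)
Your proposal is correct and takes essentially the paper's route. The paper's justification is only the remark that once $F$ is exposed through its $\mathrm{A}_3$ public interface, search sees one public operation and need not form its ancestors again; your explicit search view $\Sigma$ and your three depth definitions make that remark precise, with one minor slip: finiteness of $h_{\mathrm{form}}(F)$ is a hypothesis of the corollary rather than a consequence of Proposition~\ref{prop:compositional-closure}, and your side remarks about $h_{\mathrm{form}}(F)-1$ and semantic associativity are not needed.
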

Search sees one public operation and need not form its ancestors again. The corollary concerns the public search representation of the capability; execution continues to follow the stored executable body, and selection remains part of the surrounding search procedure.

\subsection{Reuse, Identity, and Quotient Spaces}

The provenance rule is $\mathrm{R}_1:\ P(\operatorname{Reuse}(F))=P(F)$. Repeated application gives, for finite $n\in\mathbb N_0$,
\begin{equation}
\label{eq:provenance-preservation}
  P(\operatorname{Reuse}^{n}(F))=P(F).
\end{equation}
Reuse preserves formation identity: new outputs or execution records leave the recorded formation provenance unchanged.

Within a fixed observable contract, for $G_a,G_b\in\mathfrak F$, behavioral equivalence is defined by
\begin{equation}
\label{eq:behavioral-equivalence}
  G_a\sim_F G_b
  \quad\Longleftrightarrow\quad
  \llbracket G_a\rrbracket(X;\alpha)
  =\llbracket G_b\rrbracket(X;\alpha)
  \quad\text{for all contract-valid }(X,\alpha).
\end{equation}
The identity rule $\mathrm{R}_2$ permits, but does not require, $\operatorname{opid}(G_a)=\operatorname{opid}(G_b)$ when $G_a\sim_F G_b$. Distinct provenance $P(G_a)\ne P(G_b)$ may remain. The regroupings of Proposition~\ref{prop:semantic-associativity} illustrate equivalent operations with different formation DAGs.

\begin{corollary}[Operational and formation identities are separable]
\label{cor:identity-separation}
A shared operation identity for behaviorally equivalent capabilities does not require equal formation provenance.
\end{corollary}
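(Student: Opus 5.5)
The plan is to prove Corollary~\ref{cor:identity-separation} by exhibiting a witness pair. The statement is existential in spirit: sharing an operation identity among behaviorally equivalent capabilities is \emph{compatible with} differing provenance. So it suffices to produce $G_a\sim_F G_b$ with $P(G_a)\ne P(G_b)$ and then check that assigning them a common $\operatorname{opid}$ is admissible under the rules. I would take the witness from Proposition~\ref{prop:semantic-associativity}. Set $G_a=(G_1\circ G_2)\circ G_3$ and $G_b=G_1\circ(G_2\circ G_3)$ for any triple of capabilities where both regroupings are contract-compatible and have equal flattened typed wiring $W$.

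The steps, in order, are as follows.

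\textbf{Step 1.} Use Proposition~\ref{prop:compositional-closure} to confirm that both regroupings lie in $\mathfrak F^{*}\subseteq\mathfrak F$. This makes $\sim_F$ and $\operatorname{opid}$ meaningful for them.

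\textbf{Step 2.} Invoke Proposition~\ref{prop:semantic-associativity} to obtain $G_a\sim_F G_b$. Equivalently, by \eqref{eq:behavioral-equivalence}, their denotations agree on every contract-valid $(X;\alpha)$.

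\textbf{Step 3.} Argue that $P(G_a)\ne P(G_b)$. The constructor in $\mathrm{A}_4$ records parent references and wiring. The root of $G_a$ therefore records the parents $(G_1\circ G_2,\,G_3)$, while the root of $G_b$ records $(G_1,\,G_2\circ G_3)$. These are distinct formation DAGs, so the provenance records differ.

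\textbf{Step 4.} Apply $\mathrm{R}_2$, which permits $\operatorname{opid}(G_a)=\operatorname{opid}(G_b)$ exactly on the hypothesis $G_a\sim_F G_b$. It imposes no condition on $P$. Choosing the shared identity gives a consistent assignment with equal operation identity and unequal provenance.

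For the general claim, I would add one more observation. The definition of $\sim_F$ in \eqref{eq:behavioral-equivalence} quantifies only over denotations, and $\mathrm{R}_2$ likewise refers only to $\sim_F$. Neither mentions $P$, so no derivation of a shared identity can use provenance equality as a premise. Finally, $\mathrm{R}_1$ and \eqref{eq:provenance-preservation} show that later reuse never merges the two provenances. The separation therefore persists under arbitrary finite reuse.

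The main obstacle is Step 3. The paper never states that $P$ is injective on formation DAGs; it only says that provenance, parent references, and the DAG ``remain recorded.'' I would first try to read $P(\operatorname{Compose}_\omega(G_i,G_j))$ as containing the ordered parent references, so that distinct root parent pairs force distinct provenance. If that reading is not accepted, I would weaken the corollary to the purely logical form: $\mathrm{R}_2$ does not constrain $P$, so equal provenance is not required. That weaker form suffices for the statement as written.
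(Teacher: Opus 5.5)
Your proposal is correct and matches the paper's own justification: $\mathrm{R}_2$ permits $\operatorname{opid}(G_a)=\operatorname{opid}(G_b)$ whenever $G_a\sim_F G_b$ while distinct provenance may remain, and the regroupings of Proposition~\ref{prop:semantic-associativity} serve as the illustrating pair. As you anticipated for Step 3, the paper only says those regroupings' provenance \emph{may} differ, so your fallback form is exactly the strength the paper establishes; also, in Step 1, membership in $\mathfrak F$ already follows from $\mathrm{A}_4$, and Proposition~\ref{prop:compositional-closure} is needed only if you take $G_1,G_2,G_3\in\mathfrak F^{*}$.
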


Quotienting the well-formed capability space by semantic equivalence gives
\begin{equation}
\label{eq:capability-quotient}
  \mathcal C=\mathfrak F/\!\sim_F.
\end{equation}
A class $[G]_{\sim_F}$ may contain different representation identities, formation DAGs, and provenance while representing the same observable operation. The finitely generated compositional region defined above induces the subset
\[
  \mathcal C^{*}
  :=\bigl\{[G]_{\sim_F}\in\mathcal C\mid G\in\mathfrak F^{*}\bigr\}
  \subseteq\mathcal C.
\]
If behavioral equivalence is a congruence for contract-compatible composition, $G_a\sim_F G_a'$ and $G_b\sim_F G_b'$ imply $G_a\circ G_b\sim_F G_a'\circ G_b'$. Composition then descends to compatible equivalence classes $[G_a]\circ[G_b]$; it remains partial because some contracts cannot be composed.

Let $\mathcal G$ be tasks with requested observable behavior. Under a compatible observable contract, $g_1\sim_G g_2$ means $\llbracket g_1\rrbracket=\llbracket g_2\rrbracket$, yielding
\begin{equation}
\label{eq:task-quotient}
  \mathcal T=\mathcal G/\!\sim_G.
\end{equation}
The capability quotient $\mathcal C$ and task quotient $\mathcal T$ play distinct semantic roles, and no canonical identification between them is assumed: one capability class may serve several task classes, while one task class may admit multiple operational realizations. Semantic equivalence is an ideal definition; finite experiments approximate it with domain-specific evidence. The next section describes how these capability objects and formation relations are represented and managed in the implementation.

\section{The \FormCraft{} Framework}
\label{sec:framework}

\subsection{Executable Capability Records}

A well-formed capability $G\in\mathfrak F$ is represented in the implementation by an executable \emph{capability record}. In the retained implementation, this record is realized by the class \texttt{Structure}. An executable capability record contains a record identity, public operation identity, typed input and output ports, an external-parameter schema, and an executable body. A composite record additionally stores parent identities, a static wiring rule, formation-level metadata, and formation provenance. These fields realize Equation~\eqref{eq:operational-semantics}. The persistent store may also contain auxiliary non-executable records, such as topology templates; persistent-record counts and executable-capability counts are therefore tracked separately. Storage location is independent of a capability's well-formedness and public call form.

The record identity distinguishes serialized objects, $\operatorname{opid}(F)$ denotes operational identity, and $P(F)$ records a particular formation history. Equivalent operations may retain different parent graphs. Serialization and persistent hashes preserve those distinctions when a capability record is retained for later use.

Table~\ref{tab:structure-schema} summarizes the implementation fields that connect the formal objects to the executable record. Primitive and composite capabilities share the same public record shape; composite-specific fields add formation relations without changing the operational interface.

\begin{table}[t]
\centering
\small
\setlength{\tabcolsep}{7pt}
\renewcommand{\arraystretch}{1.12}
\begin{tabular}{@{}>{\raggedright\arraybackslash}p{0.29\linewidth}>{\raggedright\arraybackslash}p{0.63\linewidth}@{}}
\toprule
\textbf{Capability-record field} & \textbf{Role in the lifecycle} \\
\midrule
Record identity & Distinguishes concrete records and serialized objects. \\
\addlinespace[5pt]
$\operatorname{opid}(G)$ & Represents operational identity independently of a particular formation history. \\
\addlinespace[5pt]
Typed input/output ports & Declare the public contract used for invocation and compatibility checks. \\
\addlinespace[5pt]
External-parameter schema & Declares the form of invocation-specific $\alpha$ while keeping call-specific values external. \\
\addlinespace[5pt]
Executable body & Implements the observable operation $\llbracket G\rrbracket$. \\
\addlinespace[5pt]
Parent references and wiring $\omega$ & Record the static formation relation for a composite capability record. \\
\addlinespace[5pt]
Formation level & Records formation-depth metadata (stored internally as \texttt{canonical\_level}). \\
\addlinespace[5pt]
Formation provenance $P(G)$ & Retains the formation history used for audit and later inspection. \\
\bottomrule
\end{tabular}
\caption{Core fields of an executable capability record. Composite-specific formation fields extend the record while preserving the same public call form.}
\label{tab:structure-schema}
\end{table}

\subsection{Capability Formation and Composition}

Primitive acquisition, explicit authorship, and automatic formation can produce executable capability records with different provenance. Composition consumes public interfaces, checks contracts and ports, and records exact parent references and the rule $\omega$. Recursive formation implements Axiom~\ref{ax:structural-rules}; finite closure follows from Proposition~\ref{prop:compositional-closure}. The resulting $F$ can be invoked and composed again before any long-term persistence decision.

The implementation therefore applies three operations to the same capability object: formation produces a well-formed $F\in\mathfrak F$; admission may retain that capability in the persistent library; later reuse exposes the same public interface for invocation or further composition. In compact form,
\begin{equation*}
{\renewcommand{\arraystretch}{1.35}
\begin{array}{r@{\qquad}r@{\quad}c@{\quad}l}
  \text{Formation:} &
    (G_i,G_j,\omega) &
    \makebox[7.5em][c]{\ensuremath{\longmapsto}} &
    F\in\mathfrak F,\\
  \text{Admission:} &
    F\in\mathfrak F &
    \makebox[7.5em][c]{\ensuremath{\xrightarrow{\text{verification}}}} &
    F\in\Library,\\
  \text{Reuse:} &
    (X,\alpha,F) &
    \makebox[7.5em][c]{\ensuremath{\longmapsto}} &
    Y\quad\text{or}\quad(F,G')\longmapsto F'.
\end{array}
}
\end{equation*}
Formation, optional persistence, direct invocation, and further composition therefore refer to one operational capability whose provenance and parent references remain attached throughout the lifecycle.

In the sequence-domain implementation, public operations use the contract $\texttt{sequence<int>}\to\texttt{int}$, and typed binary composition combines existing statistics. This concrete contract instantiates the general first-class interface defined by the formal model.

\subsection{Verification and Library Admission}

Long-term persistence is a separate lifecycle decision. We reserve $\Library\subseteq\mathfrak F$ for admitted executable capabilities; auxiliary non-executable records may coexist in the broader persistent store but are not members of $\Library$. A well-formed first-class capability can be checked against its declared contract and admitted into $\Library$:
\begin{equation}
\label{eq:admission}
  F\in\mathfrak F
  \xrightarrow{\text{verification}}
  F\in\Library,
  \qquad
  \Library\subseteq\mathfrak F.
\end{equation}
The formation rules determine membership in $\mathfrak F$ and first-class operational status; verification governs the later persistence decision. In the retained implementation, candidates may be staged in an external KV store before this gate. Admission adds the capability to $\Library$, while library membership remains distinct from well-formedness in $\mathfrak F$.

Depending on the domain, checks may include public interface and contract validation, executable behavior, examples, randomized oracle cases, cross-language agreement, cold-instance execution, instance-leakage and exact-parent audits, and persistent reload. Receipts bind promotion to checks and a source digest. Lean provides the independent proof checker used by the propositional microdomain \cite{demoura2021lean4}; the other domains instantiate verification through their executable contracts and oracles.

\subsection{First-Class Capability Reuse}

A capability has its first-class call form before persistence. Library membership makes the corresponding admitted capability record available for long-term retrieval through that same interface. Historical expansion reconstructs ancestor operations as search decisions. First-class exposure selects $F$ as one public operation and binds its slots directly, as in Equation~\eqref{eq:direct-invocation}.

Parent references and provenance remain available for audit, and execution may evaluate constituent operations internally. The search-level distinction is whether those historical formation decisions must be reconstructed before the stored capability can be selected. When $F$ is exposed through its public interface, Corollary~\ref{cor:first-class-depth} applies; Section~\ref{sec:evaluation} measures that effect in a paired increasing-depth comparison.

\section{Experimental Evaluation}
\label{sec:evaluation}

The evidence follows the capability lifecycle. Single-operation prototypes examine instance-independent reuse across continuous and discrete domains; the Lean microdomain examines independently checkable composition; the main sequence-domain study then evaluates persistent acquisition, increasing formation depth, and the reuse-time effect of exposing persisted composites directly. Each experiment family targets a different transition in the lifecycle, and quantitative counts are reported within their corresponding snapshots.

Figure~\ref{fig:capability-lifecycle} summarizes the experimental lifecycle and separates three questions: how a well-formed first-class capability is formed, whether it is admitted for persistent retention, and how a persisted capability is later exposed for reuse. The admission stage therefore governs membership in $\Library$, while well-formedness in $\mathfrak F$ and first-class status are determined earlier by the representation and formation rules.

\begin{figure}[H]
\centering
\resizebox{\linewidth}{!}{%
\begin{tikzpicture}[
  >=Stealth,
  lifemainbox/.style={
    draw,fill=white,rounded corners=4pt,line width=1pt,
    minimum height=1.62cm,inner sep=5pt,
    align=center,font=\large
  },
  lifesmallnote/.style={
    font=\footnotesize,text=lifesoftgray,align=center
  },
  lifenotetag/.style={
    font=\footnotesize,text=lifesoftgray,fill=white,
    inner sep=1.5pt,align=center
  },
  liferejectedbox/.style={
    draw=lifemidgray,fill=lifelightgrayfill,
    rounded corners=4pt,line width=0.9pt,
    minimum height=1.35cm,inner sep=5pt,
    align=center,font=\normalsize
  },
  lifeformedge/.style={
    -{Stealth[length=5.5pt,width=5.5pt]},
    draw=lifeblue,line width=1pt
  },
  lifeverifyedge/.style={
    -{Stealth[length=5.5pt,width=5.5pt]},
    draw=lifeviolet,line width=1pt
  },
  lifelibraryedge/.style={
    -{Stealth[length=5.5pt,width=5.5pt]},
    draw=lifegreen,line width=1pt
  },
  lifereuseedge/.style={
    -{Stealth[length=5.5pt,width=5.5pt]},
    draw=lifeorange,line width=1pt
  },
  liferejectededge/.style={
    -{Stealth[length=5pt,width=5pt]},
    draw=lifemidgray,line width=0.9pt,
    dash pattern=on 5pt off 4pt
  }
]

\path[fill=white] (0,0) rectangle (32.5,10.7);

\node[font=\LARGE\bfseries]
  at (16.25,10.00)
  {Formation, admission, and first-class reuse};

\path[fill=lifebluefill,rounded corners=11pt]
  (0.35,1.35) rectangle (10.10,8.90);

\path[fill=lifevioletfill,rounded corners=11pt]
  (10.30,1.35) rectangle (22.65,8.90);

\path[fill=lifeorangefill,rounded corners=11pt]
  (22.85,1.35) rectangle (32.05,8.90);


\node[
  circle,fill=lifeblue,text=white,
  minimum size=9.5mm,inner sep=0pt,font=\bfseries\Large
] at (1.15,8.30) {1};

\node[
  text=lifeblue,font=\Large\bfseries,anchor=west
] at (1.85,8.30)
  {Formation};

\draw[lifeblue,line width=0.9pt]
  (4.10,8.30) -- (9.40,8.30);

\node[
  circle,fill=lifeviolet,text=white,
  minimum size=9.5mm,inner sep=0pt,font=\bfseries\Large
] at (11.10,8.30) {2};

\node[
  text=lifeviolet,font=\Large\bfseries,anchor=west
] at (11.80,8.30)
  {Verification and admission};

\draw[lifeviolet,line width=0.9pt]
  (18.35,8.30) -- (21.95,8.30);

\node[
  circle,fill=lifeorange,text=white,
  minimum size=9.5mm,inner sep=0pt,font=\bfseries\Large
] at (23.60,8.30) {3};

\node[
  text=lifeorange,font=\Large\bfseries,anchor=west
] at (24.30,8.30)
  {Later first-class reuse};

\draw[lifeorange,line width=0.9pt]
  (29.75,8.30) -- (31.45,8.30);


\node[
  lifemainbox,draw=lifeblue,text width=2.25cm
] (LifeGoal) at (1.85,5.95)
{
  \textbf{Goal /}\\
  \textbf{unseen task}
};

\node[
  lifemainbox,draw=lifeblue,text width=2.35cm
] (LifeSearch) at (5.00,5.95)
{
  \textbf{Search or}\\
  \textbf{acquisition}
};

\node[
  lifemainbox,draw=lifeblue,text width=2.75cm
] (LifeLegal) at (8.35,5.95)
{
  \textbf{Well-formed capability}\\[3pt]
  $G\in\mathfrak F$
};

\draw[lifeformedge] (LifeGoal.east) -- (LifeSearch.west);
\draw[lifeformedge] (LifeSearch.east) -- (LifeLegal.west);

\node[lifesmallnote,text width=2.35cm]
  at (1.85,4.05)
{
  task instance\\
  or specification
};

\node[lifesmallnote,text width=2.80cm]
  at (5.00,3.95)
{
  enumeration, retrieval,\\
  or composition
};

\node[lifesmallnote,text width=3.00cm]
  at (8.35,4.05)
{
  first-class capability\\
  before persistence
};


\node[
  lifemainbox,
  draw=lifeviolet,
  text width=3.95cm,
  minimum height=2.30cm
] (LifeVerify) at (12.75,5.95)
{
  {\bfseries\large Verification suite}\\[6pt]
  {\normalsize contract / interface}\\[4pt]
  {\normalsize holdout behavior}\\[4pt]
  {\normalsize reload / persistence}
};

\node[
  lifemainbox,
  draw=lifeviolet,
  text width=2.35cm
] (LifeAdmission) at (16.95,5.95)
{
  \textbf{Admission}\\
  \textbf{decision}
};

\node[
  lifemainbox,
  draw=lifegreen,
  fill=lifegreenfill,
  text width=3.10cm,
  minimum height=1.75cm
] (LifePersist) at (20.95,5.95)
{
  \textbf{Persistent member}\\[3pt]
  $G\in\mathcal L$
};

\draw[lifeverifyedge] (LifeLegal.east) -- (LifeVerify.west);
\draw[lifeverifyedge] (LifeVerify.east) -- (LifeAdmission.west);

\draw[lifelibraryedge]
  (LifeAdmission.east)
  -- node[
       above,
       text=lifegreen,
       font=\footnotesize,
       fill=white,
       inner sep=1pt
     ] {admit}
  (LifePersist.west);

\node[lifesmallnote,text width=3.40cm]
  at (20.95,3.95)
{
  persistent membership\\
  is a library-level property
};

\node[
  liferejectedbox,
  text width=4.20cm
] (LifeRejected) at (16.95,2.60)
{
  \textbf{not promoted to $\mathcal L$}\\[2pt]
  remains in $\mathfrak F$
};

\draw[liferejectededge]
  (LifeAdmission.south) -- (LifeRejected.north);

\node[lifenotetag,anchor=east]
  at (16.48,4.12)
  {not admitted};


\node[
  lifemainbox,
  draw=lifeorange,
  text width=3.55cm,
  minimum height=1.75cm
] (LifeExposure) at (27.70,5.95)
{
  \textbf{Later first-class}\\
  \textbf{exposure}
};

\draw[lifelibraryedge]
  (LifePersist.east) -- (LifeExposure.west);

\node[
  lifemainbox,
  draw=lifeorange,
  minimum width=4.10cm,
  minimum height=1.40cm,
  font=\large\bfseries
] (LifeDirect) at (25.20,2.98)
{
  Direct invocation
};

\node[
  lifemainbox,
  draw=lifeorange,
  minimum width=4.35cm,
  minimum height=1.40cm,
  font=\large\bfseries
] (LifeCompose) at (29.95,2.98)
{
  Further composition
};

\draw[
  draw=lifeorange,
  line width=1pt
]
  (LifeExposure.south) -- (27.70,4.38);

\draw[lifereuseedge]
  (27.70,4.38) -- (25.20,4.38) -- (LifeDirect.north);

\draw[lifereuseedge]
  (27.70,4.38) -- (29.95,4.38) -- (LifeCompose.north);

\node[font=\large]
  at (25.20,1.78)
{
  $(X,\alpha)\rightarrow G\rightarrow Y$
};

\node[font=\large]
  at (29.95,1.78)
{
  $G,G'\rightarrow G''$
};

\end{tikzpicture}%
}
\caption{Formation--admission--reuse lifecycle used to organize the experimental evidence. Search or acquisition first produces a well-formed first-class capability $G\in\mathfrak F$. Verification and admission determine persistent membership $G\in\Library$, while membership in $\mathfrak F$ remains a property of the capability representation. An admitted capability can later be retrieved through the same public interface for direct invocation or further composition. The applicable verification procedure is domain-dependent.}
\label{fig:capability-lifecycle}
\end{figure}

The evaluation covers four complementary settings. Continuous and character-substitution experiments examine instance-independent reuse across numerical and discrete domains; the Lean propositional microdomain examines independent checking after composition; the sequence-domain acquisition stages evaluate persistence and recursive library growth; and the paired D01--D12 comparison measures the reuse-time effect of exposing formed composites directly through their public interfaces.

\subsection{Experimental Setup}
\label{sec:experimental-setup}

The experimental families are early continuous-domain reuse, a discrete character-substitution prototype, a propositional composition microdomain checked by Lean, and two consecutive sequence-domain acquisition stages followed by a paired increasing-depth reuse comparison. Each family targets a distinct part of the formation--admission--reuse lifecycle. The continuous and character-substitution studies establish the instance/operation separation across two representation regimes; the Lean microdomain tests independent checking after composition; and the sequence-domain snapshots provide the main quantitative lifecycle and search-work measurements.

Stage I starts with 12 persistent records and seven example-only goals. It uses 336 fresh training instances and 192 root holdouts per goal, then checks 7/7 roots after cold reload. The resulting 26-record persistent store contains 24 executable capability records and two non-executable topology-template records. Stage II starts from those 26 persistent records with 12 fixed D01--D12 goals, no training examples, and 128 seeded holdout inputs per goal. All 12/12 goals are admitted, leaving 38 persistent records, of which 36 are executable capability records, with maximum formation level ten. Because the two acquisition stages use different protocols, their holdout counts are reported separately.

The paired comparison uses the same 12 D goals and starting snapshot in both views. Task definitions, applicable examples and holdouts, oracle, seed, verifier, and candidate limit are held fixed; only the historical search representation changes between \emph{Expanded} and \emph{First-class}. Candidate attempts count attempted search candidates; composition expansions count reconstruction of composed search nodes; intermediate materializations count intermediate objects created during that reconstruction; and verification batches count grouped checks. D-level derivation depth counts newly formed historical chain levels in the search route, not runtime call depth. The same definitions and controlled snapshot are used in both views.

\paragraph{Execution environment and reproducibility.}
The archived sequence-domain artifact contains the goal definitions, fixed seeds and digests, starting-library snapshot, per-view traces, and acquisition and admission summaries. Its package specification requires Python 3.10 or newer and NumPy 1.24 or newer. The sequence-domain implementation also checks results against a JavaScript executor. The reported quantitative results are search-work counts recorded in these archived experiment snapshots.

\paragraph{Code availability.}
The public implementation corresponding to the sequence-domain acquisition and paired increasing-depth reuse experiments is available at \url{https://github.com/lancelot138505-cpu/f-system}. The released implementation is contained in \path{04_depth_control_V9-4_complete_baseline/} and corresponds to the V9.4 implementation used for the main quantitative study.

\subsection{Instance-Independent Reuse Across Domains}

Early continuous-domain prototypes tested whether a single instance-independent operation could be reused across unseen inputs under externally supplied parameters. The discrete character-substitution prototype checked the same separation outside continuous numerical domains. Together, these prototypes provide qualitative evidence that the same instance-separation principle can be expressed in both continuous and discrete settings.

\subsection{Independent Checking of Composite Capabilities}

The propositional microdomain asks whether a composite remains independently checkable after formation. Five composite capabilities were formed from primitive proof operations. Each was reused on two theorem instances, including formulas not used to define its operation. Lean accepted all ten generated theorem checks \cite{demoura2021lean4}. Table~\ref{tab:lean} summarizes the result.

\begin{table}[H]
\centering
\small
\begin{tabular}{lr}
\toprule
Measure & Result \\
\midrule
Composite capabilities formed & 5/5 \\
Composite theorem reuses & 10/10 \\
Lean checks accepted (exit code 0) & 10/10 \\
Capability-record hash and formation provenance preserved & Yes \\
\bottomrule
\end{tabular}
\caption{Composite capability reuse in the Lean-checked propositional microdomain.}
\label{tab:lean}
\end{table}

The 5/5 formation and 10/10 theorem reuse results establish independent checkability of formed composites in the propositional microdomain. The next stage asks whether well-formed capabilities can also pass the persistence gate, survive reload, and remain available as parents for later formation.

\subsection{Sequential Acquisition and Persistent-Store Growth}

Stage I admits a step capability and a root capability for each of seven goals, adding 14 persistent executable capabilities to the 12-record seed store. Parent audits and semantic checks pass, and 7/7 roots pass cold reload. Stage II starts from the resulting 26 persistent records and admits 12 new D-level executable capabilities under the fixed D01--D12 task family. Its 12/12 admission and oracle checks pass; the maximum formation level rises from four to ten.

Table~\ref{tab:growth} summarizes the sequential growth from 12 to 26 and then 38 persistent records. The first stage tests example-driven acquisition and cold persistence; the second tests whether persisted executable capabilities can continue to support compatible composition while the broader record store grows. The different training protocols remain separate in the setup and interpretation.

\begin{table}[H]
\centering
\small
\begin{tabular}{p{0.49\linewidth}rrrr}
\toprule
Stage & Start & Goals & Added $F$ & End \\
\midrule
Stage I: Example-driven acquisition & 12 & 7 & 14 & 26 \\
Stage II: Increasing-depth chain (D01--D12) & 26 & 12 & 12 & 38 \\
\bottomrule
\end{tabular}
\caption{Sequential persistent-store growth in the sequence-domain study. Start and End count persistent records; Added $F$ counts new executable capabilities.}
\label{tab:growth}
\end{table}

\subsection{First-Class Reuse under Increasing Formation Depth}

The paired depth experiment compares the two search views over the same 12 D goals, instantiating Corollary~\ref{cor:first-class-depth}. Both views use the same frozen persistent snapshot and executable capability pool. \emph{Expanded} masks the target chain and re-forms its earlier D-level capabilities; \emph{First-class} exposes the already admitted target through its public interface. Trace audits confirm that first-class selection binds public slots without inspecting the target's parent graph before selection.

Table~\ref{tab:depth} reports reuse-time search-work counts. Candidate attempts fall from 16,570 to 144, a 99.1\% reduction; binding-combination counts equal candidate attempts in both views. Repeated historical composition expansions fall from 37,300 to zero, intermediate materializations from 16,778 to zero, and verification batches from 17,116 to 192. The route-level depth of newly formed D-level derivations changes from 78 to zero.
The 12 main-chain acquisitions pass admission, while the outputs of all paired search views match the frozen oracle.

\begin{table}[H]
\centering
\small
\begin{tabular}{lrrr}
\toprule
Search-work measure & Expanded & First-class & Reduction \\
\midrule
Candidate attempts & 16,570 & 144 & 99.1\% \\
Composition expansions & 37,300 & 0 & 100\% \\
Intermediate materializations & 16,778 & 0 & 100\% \\
Verification batches & 17,116 & 192 & 98.9\% \\
\bottomrule
\end{tabular}
\caption{Paired reuse-time search-work counts over 12 goals. Outputs in both search views matched the frozen oracle on evaluated cases; the 12 main-chain acquisitions passed admission.}
\label{tab:depth}
\end{table}

The paired counts isolate historical search reconstruction. The admitted $F$ retains its formation provenance and may evaluate constituent operations internally; Equation~\eqref{eq:reformation-depth} concerns the amount of historical formation that re-enters later search. In the same sequence-domain snapshots, the FULL and ACTIVE lifecycle views expose the same aggregate 366 executable records across the 12 goals and both yield 144 first-class candidate attempts. The observed reduction therefore follows from exposing formed composites directly rather than from a change in the available capability breadth.

\section{Related Work}
\label{sec:related}

\paragraph{Programming-language abstraction.}
Programming languages have long supported abstraction boundaries that separate public operations from concrete representation and have treated procedures or functions as values that can be manipulated through ordinary program structure \cite{strachey2000fundamental,liskov1974adt}. \FormCraft{} builds on this abstraction tradition at the level of acquired capability records: recursive formation produces another record with the same public operational status, while provenance and persistence decisions remain explicit parts of the lifecycle.

\paragraph{Program synthesis.}
Program synthesis searches for programs satisfying examples or other specifications; representative methods include programming by example, enumerative and constraint-based search, and learned guidance \cite{gulwani2017synthesis,gulwani2011flashfill,alur2013sygus}. Oracle-guided component synthesis couples candidate construction to semantic checks \cite{jha2010oracle}, while FlashMeta provides a framework for building inductive synthesizers over domain-specific languages \cite{polozov2015flashmeta}. DeepCoder uses learned predictions to guide search, and RobustFill studies neural synthesis from input-output examples \cite{balog2017deepcoder,devlin2017robustfill}. These works primarily address how programs are generated or selected. \FormCraft{} focuses on the operational status of a formed capability: how it is represented, composed, retained, and exposed again through a stable public interface.

\paragraph{Library and abstraction learning.}
DreamCoder alternates program search, library expansion, and recognition-guided synthesis \cite{ellis2021dreamcoder}. Stitch synthesizes abstractions from recurring structure in a corpus of programs \cite{bowers2023stitch}. LILO combines language-model-guided synthesis, symbolic compression, and documentation to learn interpretable libraries \cite{grand2024lilo}. Recent cumulative inductive-programming work provides a particularly close example: the Minimalist Machine treats learned predicates as a growing concept library, makes them available as background knowledge for later tasks, and uses that accumulated structure to discover recursive arithmetic programs \cite{magiera2026minimalist}. These systems provide strong examples of learned components becoming useful in later synthesis. \FormCraft{} approaches the same broad area from the representation side: recursive closure is defined over one uniform capability type, formation provenance remains attached to the resulting object, and persistent library admission is handled through a separate verifier gate. Abstractions produced by other methods can enter the same lifecycle whenever they satisfy the declared operational contracts and representation rules.

\paragraph{Independent checking.}
Formal proof assistants provide a distinct way to validate claims within a logical domain. Lean 4 is a theorem prover with a small trusted kernel and an extensible programming environment \cite{demoura2021lean4}. In the propositional microdomain, generated composite theorem reuses are submitted to Lean as an independent check. Across the framework, admission is parameterized by the declared contract and the checker available for that domain, allowing proof checking, executable oracles, cross-language agreement, and reload checks to occupy the same lifecycle role.

\section{Discussion and Conclusion}
\label{sec:discussion-conclusion}

The central design choice in \FormCraft{} is to make the result of compatible composition an operational object of the same public type as its parents. Once a composite has been formed, later search can encounter that capability through its public interface instead of reconstructing the sequence of formation decisions that originally produced it. The formation DAG remains recorded for audit and execution, but the public search view can treat the composite as one reusable operation. This is the operational content of Corollary~\ref{cor:first-class-depth}.

The distinction between formation depth and execution depth is important for interpreting this result. A stored composite can still evaluate constituent operations internally, and its executable body may therefore contain a nontrivial call structure. First-class exposure changes the representation presented to later search: the historical chain does not have to be generated again as a sequence of candidate-formation decisions before the composite can be selected. The paired increasing-depth experiment measures this search-level effect directly through candidate attempts, composition expansions, intermediate materializations, verification batches, and newly formed D-level derivation depth.

Persistence serves a different role. Membership in $\Library$ is a memory policy over already well-formed capabilities: it determines which formed operations remain available for later retrieval. This separation allows verification to govern long-term retention without redefining the capability's public operational status. The same distinction also keeps provenance meaningful, because reuse can preserve both operational identity and the original formation record rather than replacing history with a newly synthesized copy.

The experiment sequence reflects these layers. The continuous and character-substitution prototypes show the instance/operation separation in two data regimes. The Lean microdomain shows that a formed composite can remain independently checkable. The two sequence-domain acquisition stages then demonstrate persistent growth and recursive reuse, and the paired depth experiment isolates the effect of exposing already formed composites directly. Together, the experiments connect the formal lifecycle to executable records and to measured search behavior.

The present quantitative evaluation concentrates on the fixed D01--D12 task family and controlled persistent snapshots. Within that setting, the FULL and ACTIVE lifecycle views expose the same aggregate 366 executable records across the 12 goals and both yield 144 first-class candidate attempts, so the measured reduction is associated with historical-depth handling rather than a change in available breadth. Larger capability libraries, explicit breadth control, equivalence consolidation, and learned proposal or retrieval mechanisms are natural next variables for extending the same lifecycle. Stronger formal contracts and additional independently checked domains provide a parallel route for extending the verification side of the system.

In the controlled sequence-domain study, the persistent store grows from 12 to 26 and then 38 records. Exposing persisted composites through their first-class interfaces reduces reuse-time candidate attempts from 16,570 to 144 while eliminating repeated historical composition expansion in the paired comparison. Combined with the cross-domain checks and the formal closure result, these findings support a simple architectural conclusion: recursive formation can produce a first-class operational object whose provenance remains available while its historical formation chain need not be reconstructed as part of later public search; verification then determines whether that object is retained for persistent reuse.

\appendix
\section{Experimental Record and Reproducibility Details}
\label{app:experimental-record}

\subsection{Search-Work Measures}

The paired depth experiment records work performed by the search and verification pipeline. The measures used throughout Section~\ref{sec:evaluation} are defined as follows.

\noindent\textbf{Candidate attempts.} Candidate search objects presented for consideration under the fixed search view.

\noindent\textbf{Binding combinations.} Slot-binding combinations considered by the search; these equal candidate attempts in the reported paired runs.

\noindent\textbf{Composition expansions.} Reconstruction events that expand historical composed nodes into the search process.

\noindent\textbf{Intermediate materializations.} Intermediate objects instantiated while reconstructing expanded composition routes.

\noindent\textbf{Verification batches.} Grouped verifier checks issued during the search procedure.

\noindent\textbf{D-level derivation depth.} Newly formed historical D-level chain steps in the search route, distinct from runtime call depth.

\subsection{Persistent-Store Accounting}

The sequence-domain experiments distinguish persistent records from executable capabilities. Stage I begins with 12 persistent records and adds 14 executable capabilities, yielding 26 persistent records in total: 24 executable capability records and two non-executable topology-template records. Stage II begins from those 26 records, adds 12 executable D-level capabilities, and ends with 38 persistent records, of which 36 are executable. The maximum formation level at the end of Stage II is ten (stored internally in the \texttt{canonical\_level} field).

\begin{table}[H]
\centering
\small
\begin{tabular}{@{}lrrr@{}}
\toprule
Snapshot & Persistent records & Executable capabilities & Auxiliary records \\
\midrule
Initial seed & 12 & -- & -- \\
After Stage I & 26 & 24 & 2 \\
After Stage II & 38 & 36 & 2 \\
\bottomrule
\end{tabular}
\caption{Persistent-record accounting for the sequence-domain acquisition stages.}
\label{tab:store-accounting}
\end{table}

\subsection{Reproducibility Record}

The public sequence-domain artifact, internally identified as V9.4, contains fixed goal definitions, seeds and digests, the starting-library snapshot, per-view traces, and acquisition and admission summaries. It is available at \url{https://github.com/lancelot138505-cpu/f-system}. Its package specification requires Python 3.10 or newer and NumPy 1.24 or newer. Sequence-domain results are also checked against a JavaScript executor. The paired comparison reuses the same goals, applicable examples and holdouts, oracle, seed, verifier, starting snapshot, and candidate limit in both search views; only the historical search representation changes between \emph{Expanded} and \emph{First-class}.

\begingroup
\raggedright
\Urlmuskip=0mu plus 1mu

\endgroup

\end{document}